\documentclass[pra,showpacs,graphics,twocolumn,floatfix,mathbbm,a4paper,nofootinbib]{revtex4-1}

\usepackage{amsthm}
\usepackage{amsmath}
\usepackage{latexsym}
\usepackage{amsfonts}
\usepackage{amssymb}
\usepackage{color}
\usepackage{bbm,dsfont}
\usepackage{graphicx}
\usepackage{subfigure}
\usepackage{mathrsfs}
\usepackage{mathbbol}
\usepackage{hyperref}
\usepackage{enumerate}
\usepackage{MnSymbol}


\newtheorem{proposition}{Proposition}
\newtheorem{proposition?}{Proposition?}

\newtheorem{corollary}{Corollary}

\theoremstyle{definition}

\newtheorem{example}{Example}




\newcommand{\real}{\mathbb R} 
\newcommand{\integer}{\mathbb Z} 

\newcommand{\hi}{\mathcal{H}} 
\newcommand{\lh}{\mathcal{L(H)}} 
\newcommand{\sh}{\mathcal{S(H)}} 
\newcommand{\ip}[2]{\left\langle\,#1\,|\,#2\,\right\rangle} 
\newcommand{\ket}[1]{|#1\rangle} 
\newcommand{\bra}[1]{\langle#1|} 
\newcommand{\kb}[2]{|#1\rangle\langle#2|} 
\newcommand{\id}{\mathbbm{1}} 



\newcommand{\A}{\mathsf{A}}
\newcommand{\B}{\mathsf{B}}
\newcommand{\C}{\mathsf{C}}
\newcommand{\D}{\mathsf{D}}
\newcommand{\G}{\mathsf{G}}
\newcommand{\M}{\mathsf{M}}


\newcommand{\T}{\mathsf{T}}
\newcommand{\N}{\mathsf{N}}







\newcommand{\state}{\mathcal{S}} 
\newcommand{\effect}{\mathcal{E}} 
\newcommand{\obs}{\mathcal{O}} 
\newcommand{\noise}{\mathcal{N}} 
\newcommand{\trivial}{\mathcal{T}} 

\newcommand{\tgeq}{\stackrel{t}{\geq}}


\begin{document}

\title[]{A necessary condition for incompatibility of observables in general probabilistic theories}

\author{Sergey N. Filippov}
\email{sergey.filippov@phystech.edu}
\address{Institute of Physics and Technology, Russian Academy of Sciences, Moscow, Russia}
\address{Moscow Institute of Physics and Technology, Dolgoprudny, Moscow Region, Russia}

\author{Teiko Heinosaari}
\email{teiko.heinosaari@utu.fi}
\address{Turku Centre for Quantum Physics, Department of Physics and Astronomy, University of Turku, 20014, Finland}

\author{Leevi Lepp\"{a}j\"{a}rvi}
\email{leille@utu.fi}
\address{Turku Centre for Quantum Physics, Department of Physics and Astronomy, University of Turku, 20014, Finland}

\pacs{03.65.Ta}

\begin{abstract}
We quantify the intrinsic noise content of an observable in a general probabilistic
theory and derive a noise content inequality for incompatible
observables. We apply the derived inequality to standard quantum theory, the quantum theory of processes, and polytope state
spaces. The noise content for positive operator-valued measures
takes a particularly simple form and equals the sum of minimal
eigenvalues of all the effects. We illustrate our findings with a
number of examples including the introduced notion of reverse observables.
\end{abstract}

\maketitle


\section{Introduction}

Quantum theory can be considered as a particular instance within a
wide range of probabilistic theories~\cite{barret-2007,chiribella-2010}.
On the one hand, quantum
theory inherits general properties of probabilistic theories and,
consequently, one may deduce some features already from a general
operational framework.
For instance, the limitations on broadcastable subsets of states can be derived from this generality~\cite{barnum-2007}.
On the other hand, particular properties of quantum theory, like specific constraints on nonlocality, partially fix its position with respect to other probabilistic theories~\cite{allcock-2009}.
As a result, specification of information-theoretic axioms may be sufficient for quantum theory to be derived~\cite{foils}.

A general probabilistic theory operates with notions of states and
observables. The set of states $\state$ is convex since any
probabilistic mixture of states must be a valid state. Observables
are then affine functionals from the set of states $\state$ to the
set of probability distributions. In the standard quantum theory,
states are associated with density operators, whereas observables
are mathematically described by positive operator-valued measures
(POVMs)~\cite{bush-book,holevo-book,heinosaari-ziman-book}.
However, when we are testing a quantum process, then quantum
channels are the examined objects and they are hence regarded as
states, whereas observables can be described by process
POVMs~\cite{chiribella-2008,ziman-2008,chiribella-2009}.
Theories describing the Popescu--Rohrlich (PR) box~\cite{popescu-1994}
and polytope state spaces serve as other examples of general probabilistic
theories~\cite{kimura-2009,janotta-2011}.

A set of observables in a general probabilistic theory may possess the
property of being incompatible, which means that those observables
cannot be seen as components of a single observable
\cite{stevens-2014,banik-2015,heinosaari-miyadera-ziman-2016,sedlak-2016}.
Incompatibility is a non-classical feature, since in a general
probabilistic theory with a classical state space all observables
are compatible, while every non-classical theory possesses some
incompatible observables \cite{aravinda-2016,plavala-2016}. It is possible to
compare the incompatibility of finite sets of observables in
different probabilistic physical theories in a quantitative way
\cite{busch-2013,gudder-2013,banik-2013}. Interestingly,
quantum theory contains maximally incompatible pairs of
observables, but only when the underlying Hilbert space is
infinite dimensional \cite{heinosaari-schultz-2014}.

This work focuses on incompatibility of observables in general
probabilistic theories. The main goal of the present investigation
is to quantify the noise content for observables in general
probabilistic theories and to exploit it in deriving a sufficient
condition for compatibility, i.e., a necessary condition for
incompatibility for a collection of observables. To demonstrate
that the derived condition is noteworthy, we use it to formulate a
readily verifiable necessary condition for incompatibility in
quantum theory. To anticipate this result, the condition takes the
following form for POVMs: \textit{If $m$ POVMs are incompatible,
then the sum of minimal eigenvalues of all their elements is less
than $m-1$.} We illustrate our findings by a number of examples
including a newly introduced class of reverse observables.
Consideration of standard quantum theory is followed by
theories with quantum processes as states, as well as the square
bit state space.

We note that in the case of POVMs, noise robustness of incompatibility has been investigated in several recent works \cite{haapasalo-2015,zhu-2015,kiukas-2015,uola-2016}.
The conditions found in those works are tighter than the condition presented in this work, but this is due to the fact that they are applicable only for POVMs with some specific structure or symmetry.
Moreover, in contrast to most of the earlier studies (see for example \cite{stevens-2014, heinosaari-miyadera-ziman-2016, gudder-2013, wolf-2009}), we do not add noise to given
observables but rather look for the intrinsic noise which is already present.
We show that a meaningful nontrivial noise inequality can be derived already at the level of a general probabilistic theory.

The paper is organized as follows. In
Sec.~\ref{sec:incompatibility} the incompatibility of observables
in general probabilistic theories is reviewed. In Sec. \ref{sec:condition} the noise content in observables is defined, and a sufficient condition
for compatibility of a set of observables is formulated.
The usage of the general condition is then demonstrated in Sec. \ref{sec:specific}.

\section{Incompatibility of observables in general probabilistic
theories} \label{sec:incompatibility}

\subsection{States, effects, and observables}

We begin by recalling the basic elements of the standard framework of general probabilistic theories (see e.g. \cite{beltrametti-1997,barnum-2011} for more detailed presentations).
In a general probabilistic theory, the set of states $\state$ is a convex subset of a finite dimensional real vector space $V$.
The convexity is a result of the probabilistic nature of the theory, meaning that the convex sum $p s_1 + (1-p) s_2$ is a state whenever $s_1,s_2$ are states and $0 \leq p \leq 1$.

We denote by $F(\state)$ the linear space of all affine functionals from $\state$ to $\mathbb{R}$, i.e., a functional $e:\state \to \mathbb{R}$ is in $F(\state)$ if it satisfies
$$
e( p s_1 + (1-p) s_2) = p e(s_1) + (1-p) e(s_2) 
$$
for all $s_1,s_2\in\state, 0 \leq p \leq 1$.
For two functionals $e,f \in F(\state)$, we denote $e \leq f$ if $e(s) \leq f(s)$ for all $s \in \state$. We further denote by $u \in F(\state)$ the unit map satisfying $u(s) = 1$ for all $s \in \state$.
The set of effects on $\state$ is defined as
\begin{align*}
\effect(\state) = \{ e \in F(\state) : 0 \leq e \leq u \} \, ,
\end{align*}
i.e., it is the convex subset of those affine functionals $e$ for which $0 \leq e(s) \leq 1$ for all $s \in \state$.
The set of effects arising as functionals on states is a particular example of an effect algebra \cite{beltrametti-1997}.
In particular, $\effect(\state)$ has a partially defined sum $e+f$, which is simply the functional addition of $e$ and $f$ defined whenever $e+f \leq u$.

An observable with a finite number of outcomes is a
function $\A: x \mapsto \A_x$ from a finite outcome set $X\subset\integer$ to
$\effect(\state)$.
The number $\A_x(s)$ is interpreted as the probability
of getting the outcome $x$ in a measurement of the observable $\A$ when the system is in
the state $s$.
As we must have $\sum_{x \in X}
\A_x(s) = 1$ for all $s \in \state$, we have the normalization
condition $\sum_{x \in X} \A_x = u$.
We denote the set of observables with an outcome set $X$ by $\obs_X$, and by $\obs$ the set of all observables with a finite number of outcomes.

A special type of observable is the \emph{trivial observable} $\T$, which is such that for each outcome $x$, $\T_x(s) = \T_x(s')$ for all $s,s'\in\state$.
We denote the set of trivial observables by $\mathcal{T}$.
Since the outcome probabilities for a trivial observable are the same for all states, it does not provide any information on an input state. 

In what follows we recall the two most important instances of general probabilistic theories, standard quantum theory and the quantum theory of processes.

\begin{example}[\emph{Quantum theory}]
Let $\state_q$ be the convex set of density operators $\varrho$ on a Hilbert space $\hi$. Then the set of effects $\effect(\state_q)$, defined as affine mappings on $\state_q$, can be represented as $e(\varrho) = {\rm tr}[\varrho E]$ for all states $\varrho$, where $E$ is a selfadjoint operator satisfying the operator inequalities $0\leq E \leq \id$. This correspondence is one-to-one, so effects can be identified with these effect operators.
With this identification, an observable $\A: x \mapsto \A_x$ with a finite outcome set $X$ is a POVM satisfying $\sum_{x \in X} \A_x = \id$.
A trivial observable $\T$ is of the form $\T_x = p_x \id$, where $p_x$ is a probability distribution on $X$.
\end{example}

\begin{example}[\emph{Quantum theory of processes}]
We denote by $\lh$ the bounded linear operators on a Hilbert space $\hi$.
Let $\state_p$ be the set of completely positive and trace
preserving maps $\Phi:\mathcal{L}(\hi_A) \mapsto \mathcal{L}(\hi_B)$, called quantum channels or processes. Then the set of
effects $\effect(\state_p)$ can be represented as the set of
operators $M$ on $\hi_A \otimes \hi_B$ satisfying $0 \leq M \leq
\varrho \otimes \id$ for some density operator $\varrho$ on
$\hi_A$. This representation is given as $e(\Phi) = {\rm
tr}[\Omega_{\Phi} M]$, where $\Omega_{\Phi}$ is the Choi operator
of $\Phi$, i.e., $\Omega_{\Phi}=(id \otimes \Phi)[\ket{\psi_+}
\bra{\psi_+}]$, where $\psi_+= \sum_{i=1}^{d} \phi_i \otimes
\phi_i$ and $\{\phi_i\}_{i=1}^d$ is an orthonormal basis of
$\hi_A$. An important point is that this correspondence between
affine maps and operators is not one-to-one; two operators $M$ and
$M'$ correspond to the same effect $e$ exactly when $M-M' = \omega
\otimes \id$ for some traceless operator $\omega$
\cite{jencova-2012,jencova-2013}. In this representation an
observable $\A: x \mapsto \A_x$ with a finite outcome set $X$
satisfies the normalization $\sum_{x \in X} \A_x = \varrho \otimes
\id$ for some density operator $\varrho$ on $\hi_A$. This kind of
map is called a process POVM, or PPOVM for short
\cite{ziman-2008}. A trivial PPOVM is of the form $\T_x = p_x
\xi_x \otimes \id$, where each $\xi_x$ is a density operator on
$\hi_A$ and $p_x$ is a probability distribution. Two trivial
PPOVMs $\T_x = p_x \xi_x \otimes \id$ and $\T'_x = p'_x \xi'_x
\otimes \id$ correspond to the same trivial observable exactly
when the probability distributions $p_x$ and $p'_x$ are the same.
\end{example}

\subsection{Post-processing of observables}

A classical channel $\nu$ between outcome spaces $X$ and $Y$ is a
right stochastic matrix with elements $\nu_{xy}$, $x \in X$, $y
\in Y$, i.e., $0\leq \nu_{xy} \leq 1$ and $\sum_{y \in Y} \nu_{xy}
= 1$. The number $\nu_{xy}$ is the transition probability for an
element $x$ to be transformed into $y$. Classical channels are
often used to describe noise, but we can also think of a classical
channel as an active transformation that is implemented on
outcomes. In the following we recall two classes of classical
channels that will be used later.

\begin{figure}
\includegraphics[width=8cm]{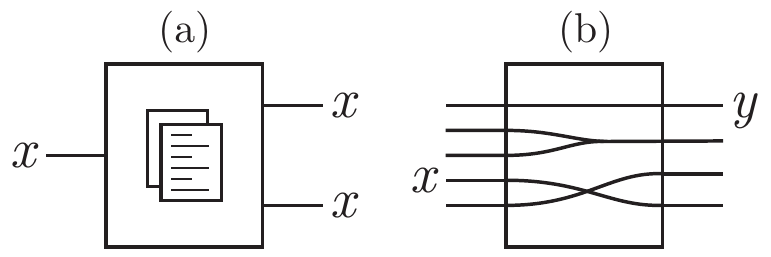}
\caption{\label{figure1} (a) Action of the classical copying
channel. (b) Example of the relabeling channel.}
\end{figure}

\begin{example}[\emph{Copying the measurement outcomes}]
Measurement outcomes are just classical symbols and thus can be
copied. To see copying as a classical channel, let $Y=X \times X$.
The stochastic matrix $\nu^{\rm c}_{xy}$ related to copying is
defined as $\nu^{\rm c}_{xy}=1$ if $y=(x,x)$ and $\nu^{\rm
c}_{xy}=0$ otherwise. This transforms any $x$ to $(x,x)$.
Fig.~\ref{figure1}(a) depicts the action of the copying channel.
Multiple applications of a copying channel allows one to make an
arbitrary number of copies of an outcome $x$. If the number of
copies equals $m$, then we call it an $m$-copying channel.
\end{example}

\begin{example}[\emph{Relabeling the measurement outcomes}]
The copying channels belong to a wider class of classical channels
where measurement outcomes are relabeled deterministically into
some other outcome. Let $f:X\to Y$ be a relabeling function. The
derived stochastic matrix $\nu^f_{xy}$ is defined as
$\nu^f_{xy}=1$ if $f(x)=y$ and $\nu^f_{xy}=0$ otherwise. In
contrast to the copying procedure, generally several outcomes can
be relabeled into a single new outcome, see Fig.~\ref{figure1}(b).
\end{example}

Let $\A$ be an observable with an outcome set $X$ and let $\nu$ be a classical channel between $X$ and some other outcome space $Y$.
We denote by
$\nu\circ\A$ the new observable defined as
\begin{align}
(\nu \circ \A)_y = \sum_{x \in X} \nu_{xy} \A_x
\end{align}
for all outcomes $y \in Y$.
Physically, the observable $\nu\circ\A$ is implemented by first measuring $\A$ and then using the classical channel $\nu$ on each obtained measurement outcome.
This way of forming new observables gives rise to a preorder in the set of observables \cite{martens-1990,buscemi-2005,heinonen-2005}.
Namely, for two observables $\A$ and $\B$, we say that \emph{$\B$ is
a post-processing of $\A$} if there exists a classical channel
$\nu$ such that $\B = \nu \circ \A$.

\begin{figure}
\includegraphics[width=8cm]{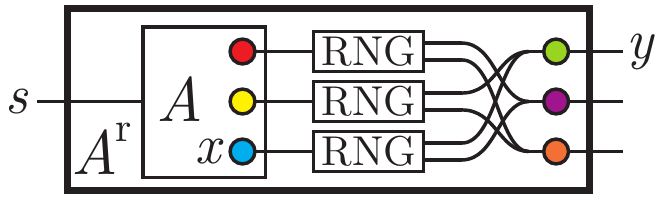}
\caption{\label{figure2} $\A^{\rm r}$ is the reverse observable
with respect to $\A$. Outcome $y$ of observable $\A$ does not
contribute to the outcome $y$ of observable $\A^{\rm r}$, so they
are illustrated by complementary colors. RNG stands for a random
number generator which uniformly chooses outcome $y \neq x$.}
\end{figure}

\begin{example}
(\emph{Reverse observable}.) A reversing channel is a classical
channel $\nu^{\rm r}: X \mapsto X$ such that $\nu^{\rm r}_{xy} =
0$ if $x=y$ and $\nu^{\rm r}_{xy} = \nu^{\rm r}_{x'y}$ for all
$x,x' \neq y$. If the outcome set $X$ contains $N$ elements, then
$\nu^{\rm r}_{xy} = \nu^{\rm r}_{x'y} = \frac{1}{N-1}$ for all
$x,x' \neq y$. For each observable $\A$, the observable $\A^{\rm
r} = \nu^{\rm r} \circ \A$ is called the reverse version of $\A$.
If $\A$ has $N$ outcomes, then the reverse observable $\A^{\rm r}$
takes the form
\begin{equation}\label{eq:not-2}
\A^{\rm r}_x = \frac{1}{N-1} \sum_{y \neq x} \A_y = \frac{1}{N-1}( u - \A_x)\, .
\end{equation}
The physical meaning of $\A^{\rm r}$ is illustrated in
Fig.~\ref{figure2}. After $\A$ has been measured and outcome $x$
has been obtained, we roll a fair dice with $N-1$ sides and randomly choose any outcome $y$ different from $x$. This is taken to be
the outcome of the new observable $\A^{\rm r}$, which is hence
given by formula \eqref{eq:not-2}.
\end{example}

\begin{example}\label{ex:doubly}
(\emph{Doubly reverse observable}.) Performing the reversing
postprocessing two times, we get
\begin{equation}
\label{eq:doubly-reverse} \A_y^{\rm rr} = \frac{1}{(N-1)^2} \left[
\A_y + (N-2) u\right],
\end{equation}

\noindent or, concisely, $\A^{\rm rr} = (1-\lambda) \A + \lambda
\T$, where $\lambda = \frac{N(N-2)}{(N-1)^2}$ and $\T$ is the
trivial observable with uniform distribution of outcomes. In the
case of two outcomes ($N=2$), the doubly reverse observable
coincides with the original one, i.e., $A^{\rm rr} = A$.
\end{example}

As one would expect, a trivial observable $\T$ is a
post-processing of any other observable $\A$. To see this, we
define a classical channel $\nu^\T$ as $\nu^\T_{xy}=\T_y(s_0)$ for
all $x$, where $s_0$ is any state. Then
\begin{align*}
(\nu^\T \circ \A)_y(s) & = \sum_{x \in X} \nu^\T_{xy} \A_x(s) = \sum_{x \in X} \T_y(s_0) \A_x(s) = \T_y(s_0) \\
& = \T_y(s) \, ,
\end{align*}
showing that $\nu^\T \circ \A = \T$.
The classical channel $\nu^\T$ just erases the outcome obtained in the $\A$-measurement, and replaces it with a new outcome according to the measurement outcome distribution of $\T$, which is the same for all states.

\subsection{Incompatibility of observables}

A collection of observables $\mathcal{P}$ is \emph{compatible} if
there exists an observable $\C$, with an outcome set $Y$, such
that each observable $\A\in\mathcal{P}$ is a post-processing of
$\C$. A compatible collection of observables can thus be
implemented simultaneously by first measuring $\C$, then copying
the classical outcomes, and finally applying the relevant
post-processings to the copied outcomes. This definition is
depicted in Fig.~\ref{figure3}. If a set of observables is not
compatible, then it is called \emph{incompatible}.

\begin{figure}
\includegraphics[width=8cm]{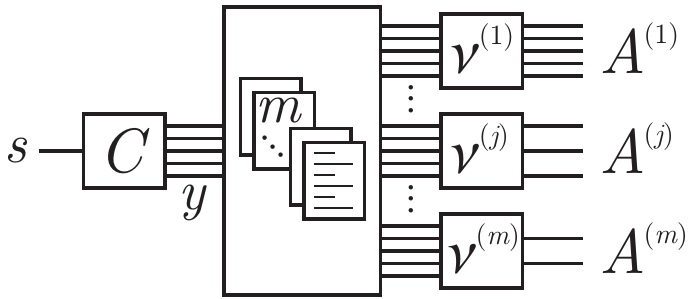}
\caption{\label{figure3} Observables $\A^{(1)}, \ldots, \A^{(m)}$
are compatible if each of them is a post-processing of some
observable $\C$.}
\end{figure}

Let $\{ \A^{(1)}, \ldots, \A^{(m)} \}$ be a compatible set of $m$
observables, with outcome sets $X^{(1)},\ldots,X^{(m)}$, respectively.
Thus, there exists an observable $\C$ and classical channels $\nu^{(1)}, \ldots, \nu^{(m)}$ such that
\begin{equation}\label{eq:mother-obs}
\A^{(j)} = \nu^{(j)} \circ \C, \quad j = 1, \ldots, m \, .
\end{equation}
To see this definition of compatibility in an equivalent form, we denote
\begin{equation}
\G_{ x^{(1)} \ldots x^{(m)} } = \sum_{y} \prod_{j=1}^{m}
\nu_{yx^{(j)}}^{(j)} \C_{y}
\end{equation}
for all $x^{(j)} \in X^{(j)}$, $j = 1, \ldots, m$. Then $\G$ is an
observable, and from \eqref{eq:mother-obs} it follows that
\begin{equation}
\label{eq:joint}
\A_{x^{(j)}}^{(j)} = \sum\limits_{x^{(i)}: \, i
\neq j} \G_{ x^{(1)} \ldots x^{(m)} }.
\end{equation}
Thus, the compatibility of observables $\A^{(1)}, \ldots, \A^{(m)}$
implies that there exists a \emph{joint
observable} $\G$ with the outcome space $X^{(1)} \times \cdots \times
X^{(m)}$ such that the observables are marginals of the joint observable.
Conversely, starting from $\G$ and taking classical channels corresponding to relabeling functions that are projections, $pr_\ell:X^n\to X$, $pr_\ell(x_1,\ldots,x_n) = x_\ell$, we see that \eqref{eq:joint} is a special case of \eqref{eq:mother-obs}.
As noted in \cite{ali-2009} in the case of quantum observables, we conclude that \emph{a subset of observables is compatible if and only if they have a joint observable}.
The latter condition is usually taken as the definition of joint measurability of quantum observables \cite{lahti-2003}.

\section{Necessary condition for incompatibility}\label{sec:condition}

\subsection{Noise content of an observable}

In order to formulate a necessary condition for incompatibility of
observables, we first quantify their intrinsic fuzziness, or
noise content, and then use the extraction of that noise in an
explicit construction of a class of joint observables.

In a general probabilistic theory, one can introduce a procedure of
mixing observables. Suppose $\A: X \to \effect(\state)$ and $\B: Y
\to \effect(\state)$ are observables with outcome sets $X$ and $Y$, respectively.
Then a mixture of $\A$ and
$\B$, with a mixing parameter $0
\leq t\leq 1$, is an observable $\C: X \cup Y \to \effect(\state)$ such that
\begin{equation}
\label{mixture} \C_z = t\A_z + (1-t) \B_z\end{equation}
for all $z \in X \cup Y$, where $\A$ and $\B$ can be extended to $X\cup Y$ by defining $\A_z=0$ if $z \nin X$ and $\B_z=0$ if $z \nin Y$.

We are interested in a situation where one of the observables in the
right-hand side of mixture \eqref{mixture} is not arbitrary but
belongs to a some specified subset $\noise \subseteq \obs$ which describes noise in the measurement. If the
target observable $\C$ is not in $\noise$, then this requirement
imposes limitations on possible values of the mixing parameter
$t$.

For the following consideration, we fix a nonempty subset $\noise \subseteq \obs$ which describes noisy observables.
Then, the physical
meaning of Eq.~\eqref{mixture} is to decompose an observable into
its noisy part and the rest.
A quantitative description of the noise content is attained by
maximizing $t$.
Therefore, for each observable $\A$, we denote
\begin{align}\label{eq:noise}
\nonumber w(\A;\noise) &= \sup\{ 0\leq t\leq 1 :  t\N + (1-t) \B=\A \\
 & \quad \quad \quad \quad \quad  \textrm{for some }\N\in\noise \textrm{ and }\B\in\obs \}
\end{align}
and call this quantity the \emph{noise content of $\A$ with respect to $\noise$}.
We note that the observables $\N$ and $\B$ in \eqref{eq:noise} can be assumed to have the same outcome set as $\A$.

Whenever $\A_x \geq t \N_x$ for some $0\leq t \leq 1$ for all $x \in X$, we will use the notation $\A \tgeq
\N$. Suppose $0 \leq t < 1$ and $\A \tgeq \N$, then we can write
$\A$ as a mixture
\begin{equation}\label{eq:A=tB}
\A = t \N + (1-t) \widetilde{\A} \, ,
\end{equation}

\noindent where $\widetilde{\A}$ is the observable defined as
\begin{equation}
\label{less-fuzzy} \widetilde{\A} = (1-t)^{-1}(\A - t \N) \, .
\end{equation}

\noindent Conversely, if there exists some observable
$\widetilde{\A}$ such that \eqref{eq:A=tB} holds, then $\A \tgeq
\N$. Thus, one can reformulate the definition of noise content of
$\A$ with respect to $\noise$ as follows:
\begin{align}
\label{eq:noise2} w(\A;\noise) = \sup\{ 0\leq t \leq 1 :  \A \tgeq
\N \textrm{ for some }\N\in\noise\}.
\end{align}

Specific properties of the map $\A \mapsto w(\A;\noise)$ depend on
the choice of the subset $\noise$. There are, however, some
general features valid for any noise set $\noise$. In
particular, we observe the following:
\begin{itemize}
\item[(a)] If $\nu$ is a classical channel and $\nu \circ \noise
\subseteq \noise$, then $w(\nu \circ \A;\noise) \geq
w(\A;\noise)$.

\item[(b)] If $\noise$ is convex, then $w(s \A + (1-s)
\B;\noise) \geq s w(\A;\noise) + (1-s) w(\B;\noise)$ for all
$0\leq s \leq 1$.
\end{itemize}
The first property follows directly from the definition of $w(\A;\noise)$, while the latter is seen to be valid by first noticing that
$$
s \A_x + (1-s) \B_x \geq s w(\A;
\noise) \N_x + (1 - s) w(\B; \noise) \M_x
$$
for some observables $\N,\M \in \noise$ and all outcomes $x$.
We denote
\begin{align*}
p_{\A} &= s w(\A; \noise) / [s w(\A; \noise) + (1 - s) w(\B; \noise)] \, , \\
p_{\B} &= (1- s) w(\B; \noise) / [s w(\A; \noise) + (1 - s) w(\B; \noise)]
\end{align*}
and then obtain
$$
s \A_x + (1-s) \B_x \geq [ s w(\A; \noise) + (1 - s)
w(\B; \noise) ] (p_{\A} \N +
p_{\B} \M)_x \, ,
$$
where $p_{\A} \N +p_{\B} \M \in \noise$ as the set $\noise$ is convex.

The prototypical choice for $\noise$ is to take $\noise = \trivial$, the set of all trivial observables.
In this case, we simply say that $w(\A;\trivial)$ is the noise content of $\A$.
The set $\trivial$ is convex and $\nu \circ \trivial \subseteq \trivial$ for all classical channels.

\begin{proposition}\label{prop:inf}
Let $\A$ be an observable on a finite outcome set $X$. Then $w(\A;\trivial) = \sum_{x \in X}  \inf_{s \in \state} \A_x(s)$.
\end{proposition}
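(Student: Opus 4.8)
The plan is to reduce the supremum in \eqref{eq:noise2} to an explicit optimization over probability distributions on $X$ and then solve it directly. The first step is to pin down the form of a trivial observable: since each component $\N_x$ of an $\N\in\trivial$ is an effect that takes the same value on every state, it must be a constant multiple of the unit map, $\N_x = p_x u$, and the normalization $\sum_{x} \N_x = u$ then forces $(p_x)_{x\in X}$ to be a probability distribution on $X$. Conversely, any such family yields a trivial observable, so $\trivial$ is parametrized exactly by the probability distributions on $X$.

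The second step is to translate the relation $\A \tgeq \N$ for $\N_x = p_x u$. By definition it reads $\A_x(s) \geq t\,\N_x(s) = t p_x$ for all $x\in X$ and all $s\in\state$, and taking the infimum over $s$ shows this is equivalent to the scalar inequalities $a_x \geq t p_x$ for every $x$, where I abbreviate $a_x := \inf_{s\in\state}\A_x(s) \geq 0$. Hence \eqref{eq:noise2} becomes the statement that $w(\A;\trivial)$ is the largest $t\in[0,1]$ admitting a probability distribution $(p_x)$ with $t p_x \leq a_x$ for all $x$.

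For the upper bound I would sum these constraints over $x$: using $\sum_x p_x = 1$ gives $t \leq \sum_x a_x =: S$, so $w(\A;\trivial)\leq S$. I would also record, for use in the converse direction, that $S\leq 1$, which follows by evaluating the normalization of $\A$ at any fixed state $s'$: $\sum_x a_x \leq \sum_x \A_x(s') = 1$. For the matching lower bound, assuming $S>0$, I would exhibit the optimal noise by setting $p_x := a_x/S$, which is a genuine probability distribution, so $\N_x := p_x u$ lies in $\trivial$; then $S\,\N_x = a_x u \leq \A_x$ because $\A_x(s)\geq a_x$ for all $s$ by definition of $a_x$. Since $S\leq 1$, the value $t=S$ is admissible, giving $w(\A;\trivial)\geq S$ and hence equality. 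The degenerate case $S=0$ is immediate: the upper bound forces $w(\A;\trivial)\leq 0$, while $t=0$ is always admissible, so both sides vanish.

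I do not expect a genuine obstacle here, since the argument is essentially a one-line optimization once the setup is in place. The two points that need care are the identification of trivial observables with points of the probability simplex (so that the feasibility problem is really over distributions $(p_x)$, and so that the candidate $\N$ built in the lower bound is a legitimate trivial observable), and the inequality $S\leq 1$, which is exactly what guarantees that the optimal value $t=S$ does not leave the allowed range $[0,1]$.
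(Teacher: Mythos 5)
Your proof is correct and follows essentially the same route as the paper's: translate $\A \tgeq \N$ into the scalar constraints $a_x \geq t p_x$, obtain the upper bound $t \leq \sum_x a_x$ by summing over $x$, and attain it with the trivial observable given by $p_x = a_x / \sum_y a_y$, treating the vanishing-sum case separately. The only difference is cosmetic: you explicitly verify $\sum_x a_x \leq 1$ (via normalization of $\A$ at a fixed state) and the exact parametrization of $\trivial$ by the probability simplex, both of which the paper leaves implicit.
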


\begin{proof}
Denote  $a_x = \inf_{s \in \state} \A_x(s)$ and $a=\sum_x a_x$. First assume that $a_x =0$ for all $x\in X$ so that $a=0$. Let $\T\in\trivial$ be a trivial observable and take any $t$, $0 \leq t \leq 1$, such that $\A \tgeq \T$. By our definitions this is equivalent to $\A_x(s) \geq t \T_x(s)$ for all $x\in X$ and $s \in \state$, so that for all $x \in X$ we have that
\begin{align*}
0 = a_x = \inf_{s \in \state} \A_x(s) \geq t \inf_{s \in \state} \T_x(s) = t p_x,
\end{align*}
where $p_x \equiv \T_x(s)$ is the probability distribution defined by $\T$. Summing over $x$ we get
\begin{align*}
0 = a = \sum_x a_x \geq t \sum_x p_x = t.
\end{align*}
Since also $0\leq t \leq 1$, we must have $t=0$, and since this holds for all $\T \in \trivial$, by \eqref{eq:noise2} we get that $w(\A;\trivial)=a=0$.

Secondly, assume that $a_x \neq 0$ at least for some $x \in X$. By similar arguments as above, we see that for all $x \in X$ we have $a_x \geq t' p'_x$, where $p'_x = \T'_x(s)$ is a probability distribution defined by some trivial observable $\T'\in \trivial$ for some $0\leq t'\leq 1$. Summing over all $x$ we then get an upper bound for $t'$ as $a = \sum_x a_x \geq t'$. We see that the upper bound is attained if we define $\T'$ as $\T'(s)= p'_x=a_x/a$. Thus by \eqref{eq:noise2} we have that $w(\A; \trivial) =a$.
\end{proof}

\subsection{Joint measurement scheme}

The joint measurement scheme that we will next discuss is an elaboration of the one presented in \cite{uola-2016}.
The idea is that we first write the definition of compatibility in a slightly different way, then limit the defining conditions, and in this way we obtain a computable sufficient condition for compatibility.

From the definition, two observables $\A$ and $\B$ are compatible if
there exists a third observable $\C$ and classical channels
$\nu_1$ and $\nu_2$ such that $\A = \nu_1 \circ \C$ and $\B =
\nu_2 \circ \C$. Let us consider a seemingly more general scheme,
where we are asking for the existence of two observables $\C$ and
$\D$, classical channels $\nu_1$, $\nu_2$, $\mu_1$ and $\mu_2$,
and a mixing parameter $t$ such that
\begin{align}
\A & = t\nu_1 \circ \C + (1-t) \mu_1 \circ \D \label{eq:A-gen}\\
\B & = t\nu_2 \circ \C + (1-t) \mu_2 \circ \D \, . \label{eq:B-gen}
\end{align}
Thus, $\A$ and $\B$ are now required to be mixtures of post-processings of $\C$ and $\D$; see Fig. \ref{figure4}.

Clearly, the conditions  \eqref{eq:A-gen}--\eqref{eq:B-gen} reduce
to the usual compatibility conditions when $t=1$. Therefore, every
compatible pair can be written in this new form. Conversely, if
two observables $\A$ and $\B$ can be written in the form
\eqref{eq:A-gen}--\eqref{eq:B-gen}, then they are compatible.
In fact, $\A$ and $\B$ are post-processings of the mixed
observable $t \C + (1-t) \D$, but now the mixture has an extra outcome to keep track of which observable was measured each time.
After measuring either $\C$ or $\D$, we duplicate the outcome and post-process with either $\nu_1$ and $\nu_2$ or $\mu_1$ and $\mu_2$, depending on the measured observable.

\begin{figure}
\includegraphics[width=8.5cm]{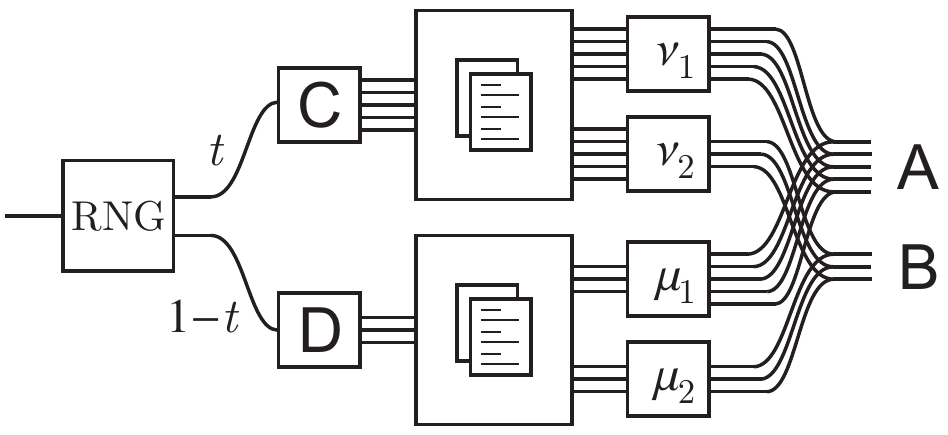}
\caption{\label{figure4} The considered joint measurement
scheme for two observables $\A$ and $\B$ consists of a random choice between two observables $\C$ and $\D$, followed by separated post-processing for both $\A$ and $\B$ that aim to approximate these observables.}
\end{figure}

\subsection{Incompatibility inequality}

As a special case of the joint measurement scheme described previously, we limit the choice of classical channels $\mu_1$  and $\nu_2$ to those that make observables  $\mu_1 \circ \D$ and $\nu_2 \circ \C$ trivial.
Since any trivial observable is a post-processing of any other observable, we get all trivial observables, irrespective of $\C$ and $\D$.
Hence, the conditions  \eqref{eq:A-gen}--\eqref{eq:B-gen} reduce to
\begin{align}
\A & = t \nu_1 \circ \C + (1-t) \T_1 \label{eq:A-spes} \\
\B & = t \T_2 + (1-t) \mu_2 \circ \D \, , \label{eq:B-spes}
\end{align}
where $\T_1$ and $\T_2$ are arbitrary trivial observables.
Since we have added an extra limitation to the conditions \eqref{eq:A-gen}--\eqref{eq:B-gen}, we cannot be sure anymore that a pair of compatible observables have this kind of representation.
However, if $w(\A;\trivial) \geq 1-t$ and $w(\B;\trivial) \geq t$, then by the definition of noise content we can find suitable observables $\C$ and $\D$ such that \eqref{eq:A-spes}--\eqref{eq:B-spes} hold.

As a conclusion, we obtain the following result and its equivalent formulation.

\begin{proposition}\label{prop:compatibility}
If $\A$ and $\B$ are two observables such that $w(\A;\trivial)+w(\B;\trivial) \geq 1$, then they are compatible.
\end{proposition}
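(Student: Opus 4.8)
The plan is to reduce the claim to the joint measurement scheme of Eqs.~\eqref{eq:A-spes}--\eqref{eq:B-spes}, which has already been shown to produce a genuinely compatible pair. Thus it suffices to exhibit a single mixing parameter $t$ together with trivial observables $\T_1,\T_2$ for which $\A$ and $\B$ simultaneously admit those decompositions. Everything will hinge on matching the available noise content of each observable to the weight of the trivial part demanded by the scheme.

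First I would fix the mixing parameter. The scheme extracts a trivial part of weight $1-t$ from $\A$ and of weight $t$ from $\B$, so I need $t$ with $0\leq t\leq 1$, $w(\A;\trivial)\geq 1-t$, and $w(\B;\trivial)\geq t$. The latter two requirements read $1-w(\A;\trivial)\leq t\leq w(\B;\trivial)$, and this interval is nonempty exactly when $w(\A;\trivial)+w(\B;\trivial)\geq 1$, which is the hypothesis. The concrete choice $t=w(\B;\trivial)$ works, since then $1-t=1-w(\B;\trivial)\leq w(\A;\trivial)$, and $0\leq t\leq 1$ holds automatically because each noise content lies in $[0,1]$.

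Next I would build the decompositions. By Proposition~\ref{prop:inf} the supremum defining the noise content is attained by an explicit trivial observable, so there is $\T_1\in\trivial$ with $\A_x\geq w(\A;\trivial)\,(\T_1)_x\geq (1-t)\,(\T_1)_x$ for every outcome $x$. Hence $\widetilde{\A}:=t^{-1}(\A-(1-t)\T_1)$ has nonnegative effects that sum to $u$, so it is a genuine observable in the sense of \eqref{less-fuzzy}, and $\A=(1-t)\T_1+t\widetilde{\A}$. Symmetrically, from $w(\B;\trivial)\geq t$ I obtain $\T_2\in\trivial$ and an observable $\widetilde{\B}:=(1-t)^{-1}(\B-t\T_2)$ with $\B=t\T_2+(1-t)\widetilde{\B}$. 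Taking $\C=\widetilde{\A}$ and $\D=\widetilde{\B}$, letting $\nu_1,\mu_2$ be identity channels, and realizing $\T_1,\T_2$ as post-processings of $\D,\C$ through the erasing channels of Sec.~\ref{sec:incompatibility}, puts $\A$ and $\B$ precisely in the form \eqref{eq:A-spes}--\eqref{eq:B-spes}, whence they are compatible.

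The only delicate points are bookkeeping rather than conceptual. One must ensure that the supremum in \eqref{eq:noise2} is actually achieved, so that the decomposition is exact and not merely approximate; this is exactly what Proposition~\ref{prop:inf} guarantees for $\noise=\trivial$. One must also handle the boundary values $t=0$ and $t=1$, where the formulas for $\widetilde{\A}$ or $\widetilde{\B}$ degenerate: these force $w(\B;\trivial)=0$ or $w(\B;\trivial)=1$ and hence make $\A$ or $\B$ itself trivial, in which case compatibility is immediate since a trivial observable is a post-processing of any observable. I expect this edge-case verification to be the only real obstacle, and a mild one.
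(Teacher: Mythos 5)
Your proof is correct and follows essentially the same route as the paper: choose $t$ in the nonempty interval $[1-w(\A;\trivial),\,w(\B;\trivial)]$, extract exact trivial parts from $\A$ and $\B$, and invoke the joint measurement scheme \eqref{eq:A-spes}--\eqref{eq:B-spes} to conclude compatibility. Your explicit attention to attainment of the supremum (via Prop.~\ref{prop:inf}) and to the degenerate values $t\in\{0,1\}$ is somewhat more careful bookkeeping than the paper's one-line justification, but it is the same argument.
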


\begin{proposition}\label{prop:incompatibility}
If $\A$ and $\B$ are incompatible observables, then $w(\A;\trivial)+w(\B;\trivial) < 1$.
\end{proposition}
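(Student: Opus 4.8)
The plan is to observe that this statement is nothing more than the logical contrapositive of Proposition~\ref{prop:compatibility}, so essentially no new mathematical work is required. Proposition~\ref{prop:compatibility} asserts the implication ``$w(\A;\trivial)+w(\B;\trivial)\geq 1 \Rightarrow \A$ and $\B$ are compatible.'' Writing this as $R \Rightarrow S$, with $R$ the inequality $w(\A;\trivial)+w(\B;\trivial)\geq 1$ and $S$ the statement that $\A,\B$ are compatible, the present proposition is precisely $\neg S \Rightarrow \neg R$, which is logically equivalent to $R \Rightarrow S$.

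First I would negate the conclusion of Proposition~\ref{prop:compatibility}: the failure of compatibility is by definition incompatibility, so $\neg S$ reads ``$\A$ and $\B$ are incompatible.'' Next I would negate the hypothesis: the negation of $w(\A;\trivial)+w(\B;\trivial)\geq 1$ is the strict inequality $w(\A;\trivial)+w(\B;\trivial)<1$. Chaining these two negations yields exactly the asserted implication, and the proof is complete in a single line.

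I do not expect any genuine obstacle here, because the substantive content was already carried out in establishing Proposition~\ref{prop:compatibility}: namely the joint measurement scheme \eqref{eq:A-spes}--\eqref{eq:B-spes}, together with the observation that one can choose a mixing parameter $t$ satisfying $1-t\leq w(\A;\trivial)$ and $t\leq w(\B;\trivial)$ precisely when $w(\A;\trivial)+w(\B;\trivial)\geq 1$. If one instead insisted on arguing directly, the only nontrivial step would be to reconstruct that explicit joint observable from the two noise decompositions of $\A$ and $\B$; but this merely duplicates the earlier construction, so invoking the contrapositive is both cleaner and sufficient.
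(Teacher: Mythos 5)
Your proposal is correct and coincides with the paper's own treatment: the paper gives no separate argument for this statement, introducing it together with Proposition~\ref{prop:compatibility} as ``the following result and its equivalent formulation,'' i.e., exactly the contrapositive reading you describe. All substantive work indeed resides in the joint measurement scheme proving Proposition~\ref{prop:compatibility}, so nothing further is required here.
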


The joint measurement scheme has a direct generalization for any finite number of observables. Let us consider $m$ observables $\A^{(1)},\ldots,\A^{(m-1)}$ and $\A^{(m)}$. We can then generalize conditions \eqref{eq:A-spes}--\eqref{eq:B-spes} to
\begin{align}
\A^{(j)} & = p_j \nu_j \circ \C^{(j)} + (1-p_j) \T^{(j)}, \label{eq:m-spes}
\end{align}
where $\T^{(j)}$ is an arbitraty trivial observable for each $j=1, \ldots, m$ and $p_j$ is an arbitrary probability distribution. As above, if $w(\A^{(j)};\trivial)\geq 1-p_j$ for all $j$ we can make \eqref{eq:m-spes} hold. By summing over $j$ we conclude the following generalization of Prop. \ref{prop:incompatibility}.

\begin{proposition}\label{prop:incompatibility-m}
If $\A^{(1)},\A^{(2)},\ldots,\A^{(m)}$ are incompatible observables, then $w(\A^{(1)};\trivial)+\ldots +w(\A^{(m)};\trivial)< m-1$.
\end{proposition}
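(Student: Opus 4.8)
The plan is to generalize the two-observable scheme of Eqs.~\eqref{eq:A-spes}--\eqref{eq:B-spes} to $m$ observables by distributing the ``trivial budget'' among them, exactly as indicated in the paragraph preceding the statement. First I would show that the conditions \eqref{eq:m-spes} do indeed constitute a valid joint measurement scheme, i.e., that if each $\A^{(j)}$ admits the decomposition
\begin{equation*}
\A^{(j)} = p_j\, \nu_j \circ \C^{(j)} + (1-p_j)\, \T^{(j)}
\end{equation*}
for some observables $\C^{(j)}$, classical channels $\nu_j$, trivial observables $\T^{(j)}$, and a probability distribution $(p_1,\ldots,p_m)$, then $\A^{(1)},\ldots,\A^{(m)}$ are compatible. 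This is the direct analogue of the two-observable argument: one builds a mother observable as the $(p_j)$-mixture of the $\C^{(j)}$, tagged by an extra index recording which branch was selected, and then recovers each $\A^{(j)}$ by the appropriate post-processing. In the branch labeled $j$ one applies $\nu_j$ to reproduce $p_j \nu_j \circ \C^{(j)}$, while in every other branch $i \neq j$ one applies the erasing channel $\nu^{\T}$ of the earlier discussion to produce the trivial part; since a trivial observable is a post-processing of any observable, the aggregate of these off-branch contributions realizes $(1-p_j)\T^{(j)}$.

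Second, I would identify when such a decomposition exists. By definition of noise content in the form \eqref{eq:noise2}, the observable $\A^{(j)}$ can be written as $p_j \N^{(j)} + (1-p_j)\B^{(j)}$ with $\N^{(j)} \in \trivial$ precisely when $w(\A^{(j)};\trivial) \geq p_j$; equivalently, choosing the trivial part to carry weight $1-p_j$ requires $w(\A^{(j)};\trivial) \geq 1 - p_j$ after relabeling the roles, matching the stated threshold. So the scheme \eqref{eq:m-spes} can be satisfied simultaneously for all $j$ whenever there exists a probability distribution $(p_1,\ldots,p_m)$ with $w(\A^{(j)};\trivial) \geq 1 - p_j$ for every $j$.

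Third, I would convert this existence condition into the sum inequality. Summing $w(\A^{(j)};\trivial) \geq 1 - p_j$ over $j$ and using $\sum_j p_j = 1$ gives
\begin{equation*}
\sum_{j=1}^{m} w(\A^{(j)};\trivial) \geq \sum_{j=1}^{m}(1 - p_j) = m - 1 \, .
\end{equation*}
Conversely, if $\sum_j w(\A^{(j)};\trivial) \geq m-1$, one can choose $p_j = 1 - w(\A^{(j)};\trivial)$ (clamping into $[0,1]$ and renormalizing if the sum exceeds $m-1$) to obtain a valid distribution meeting every threshold, hence the observables are compatible. Taking the contrapositive yields the claim: if the observables are incompatible, no such distribution exists, so $\sum_j w(\A^{(j)};\trivial) < m-1$.

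The main obstacle I anticipate is the feasibility bookkeeping in the third step: I must confirm that the thresholds $w(\A^{(j)};\trivial) \geq 1 - p_j$ are jointly satisfiable by a genuine probability distribution and not merely by nonnegative weights. The inequalities $1 - p_j \leq w(\A^{(j)};\trivial)$ force $p_j \geq 1 - w(\A^{(j)};\trivial)$, and since each $w(\A^{(j)};\trivial) \in [0,1]$ these lower bounds are nonnegative; a distribution satisfying all of them exists iff their sum does not exceed $1$, which is exactly $\sum_j \bigl(1 - w(\A^{(j)};\trivial)\bigr) \leq 1$, i.e.\ $\sum_j w(\A^{(j)};\trivial) \geq m-1$. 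Verifying this equivalence carefully, and checking that the resulting construction \eqref{eq:m-spes} really reproduces each $\A^{(j)}$ through the tagged mother observable, is the only genuinely non-routine part; the rest reduces to the two-observable argument already established.
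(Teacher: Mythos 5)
Your proposal is correct and follows essentially the same route as the paper: generalize the two-observable scheme to \eqref{eq:m-spes} with a tagged mother observable built from the $(p_j)$-mixture of the $\C^{(j)}$, note that feasibility for each $j$ requires $w(\A^{(j)};\trivial)\geq 1-p_j$, and observe that a probability distribution meeting all these thresholds exists precisely when $\sum_j w(\A^{(j)};\trivial)\geq m-1$, whence the contrapositive gives the claim. Your explicit feasibility bookkeeping in the final step simply fills in what the paper compresses into the phrase ``by summing over $j$.''
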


\section{Applications of the incompatibility condition}\label{sec:specific}

\subsection{Eigenvalue condition for POVMs}

If $\A$ is an observable in finite dimensional quantum theory described by a POVM, we have that
\begin{equation}
\inf_{s \in \state} \A_x(s)= \min_{\psi \neq 0} \dfrac{\ip{\psi}{\A_x \psi}}{\ip{\psi}{\psi}} \, .
\end{equation}
It follows that $\inf_{s \in \state} \A_x(s)$ is the smallest eigenvalue of the effect operator $\A_x$.
Hence, by Prop. \ref{prop:inf} we conclude that $w(\A;\trivial)$ is the sum of the minimal eigenvalues of operators $\A_x$.
Combining this with Prop. \ref{prop:incompatibility-m}, we reach the following necessary condition for incompatibility.

\begin{corollary}\label{prop:eigen}
If $\A^{(1)},\ldots,\A^{(m)}$ is a collection of $m$ incompatible POVMs, then
the sum of the minimal eigenvalues of all their effects is smaller than $m-1$.
\end{corollary}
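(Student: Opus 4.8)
The plan is to obtain the corollary as a direct specialization of Proposition~\ref{prop:incompatibility-m} to finite-dimensional quantum theory, using the characterization of the noise content recorded in the paragraph immediately preceding the statement. The only genuinely quantum ingredient I would invoke is the identity $\inf_{s \in \state} \A_x(s) = \min_{\psi \neq 0} \ip{\psi}{\A_x \psi}/\ip{\psi}{\psi}$, which says that the infimum of the effect probability over all states equals the smallest eigenvalue of the effect operator $\A_x$. This holds because, writing $\A_x(s) = \mathrm{tr}[\varrho \A_x]$, the infimum over density operators of an affine functional is attained at an extreme point, i.e.\ a pure state, and for pure states the expectation reduces to the Rayleigh quotient whose minimum is the least eigenvalue.

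First I would fix one of the incompatible POVMs, say $\A^{(j)}$, and apply Proposition~\ref{prop:inf}, which gives $w(\A^{(j)};\trivial) = \sum_x \inf_{s \in \state} \A^{(j)}_x(s)$. Combining this with the eigenvalue identity above, $w(\A^{(j)};\trivial)$ equals the sum over all outcomes of the minimal eigenvalues of the effect operators $\A^{(j)}_x$. Summing this equality over $j = 1,\ldots,m$ then shows that $\sum_{j=1}^{m} w(\A^{(j)};\trivial)$ is exactly the sum of the minimal eigenvalues of all the effects of all $m$ POVMs.

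To finish, I would invoke Proposition~\ref{prop:incompatibility-m}: since $\A^{(1)},\ldots,\A^{(m)}$ are incompatible, their total noise content satisfies $\sum_{j=1}^{m} w(\A^{(j)};\trivial) < m-1$. Substituting the value of the left-hand side computed above yields precisely the claimed bound. There is no real obstacle left at this stage --- the corollary is a clean assembly of already-established facts --- so the conceptual weight sits entirely in the earlier results: the evaluation of $w(\A;\trivial)$ as a sum of infima in Proposition~\ref{prop:inf}, and the joint-measurement construction underlying the incompatibility inequality in Proposition~\ref{prop:incompatibility-m}. The only point I would verify carefully is that the outcome sets $X^{(j)}$ may differ across the POVMs, but since each $w(\A^{(j)};\trivial)$ is computed on its own outcome set this causes no difficulty, and the additivity of the bound over $j$ goes through unchanged.
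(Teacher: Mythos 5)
Your proposal is correct and follows essentially the same route as the paper: the identity $\inf_{s \in \state} \A_x(s)= \min_{\psi \neq 0} \ip{\psi}{\A_x \psi}/\ip{\psi}{\psi}$ identifying the infimum with the smallest eigenvalue, then Proposition~\ref{prop:inf} to evaluate each $w(\A^{(j)};\trivial)$ as the sum of minimal eigenvalues, and finally Proposition~\ref{prop:incompatibility-m} to obtain the bound $m-1$. Your added justification of the eigenvalue identity via extreme points and the Rayleigh quotient, and your remark that differing outcome sets cause no difficulty, are both sound and merely make explicit what the paper leaves implicit.
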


We will next illustrate the use of Cor. \ref{prop:eigen} in the
case of reverse observables. Consider a \emph{regular rank-1
POVM} $\A$, i.e., the effects of $\A$ read $\A_x = \frac{d}{N}
P_x$, where $d$ is the dimension of the Hilbert space, $N$ is the
number of outcomes and $P_x$ is a one-dimensional projection.
Examples of regular rank-1 POVMs include all nondegenerate sharp
POVMs and symmetric informationally complete POVMs.

As before, we denote by $\A^{\rm r} = \nu^{\rm r} \circ \A$ the
reverse version of $\A$. If $\A$ is a regular rank-1 POVM, then
the smallest eigenvalue of each operator $\A^{\rm r}_x$ is
$\frac{N-d}{N(N-1)}$. Applying Cor. \ref{prop:eigen}, we conclude
that the reverse versions of $m$ regular rank-1 POVMs with $N$
outcomes are compatible if
\begin{equation}\label{eq:reversed}
N \geq (d-1)\cdot m + 1 \, .
\end{equation}
It follows from this observation that, for instance, the reverse
versions of two regular rank-1 POVMs in $d=2$ are compatible for
all $N\geq 3$. One can readily find POVMs with two outcomes whose
reverse versions are incompatible; this is the case whenever the
original ones are incompatible since, in the case of two outcomes,
reversing is a reversible classical channel. Since the reversing
channel is more and more noisy when the number of outcomes
increases, one may wonder if there are any incompatible
collections of reverse POVMs when the number of outcomes is more
than two. In the following example we present a triplet of regular
rank-1 POVMs whose reverse versions are incompatible; the simple
compatibility condition \eqref{eq:reversed} is hence not trivial.

\begin{example}[\emph{Incompatible reverse POVMs}]
Consider three orthonormal bases $\{ \varphi_i \}_{i=1}^3$, $\{
\psi_i \}_{i=1}^3$, and $\{ \chi_i \}_{i=1}^3$ in a
three-dimensional Hilbert space $\hi_3$ such that a set
$\{\varphi_i,\psi_j,\chi_k \}$ is linearly independent for all
fixed $i,j,k$. Let $\A$, $\B$ and $\C$ be the POVMs related to
these bases, i.e., $\A_i = \ket{\varphi_i} \bra{\varphi_i}$, $\B_i
= \ket{\psi_i} \bra{\psi_i}$ and $\C_i = \ket{\chi_i}
\bra{\chi_i}$. The fact that the reverse POVMs $\A^{\rm r},\B^{\rm
r},\C^{\rm r}$ are incompatible can be proven by a contradiction.
Suppose $\A^{\rm r},\B^{\rm r},\C^{\rm r}$ are compatible, so that
there exists a joint POVM $\G$ with elements $\G_{ijk}$ such that
$\A^{\rm r}_i = \sum_{jk} \G_{ijk}$, $\B^{\rm r}_j = \sum_{ik}
\G_{ijk}$, and $\C^{\rm r}_k = \sum_{ij}\G_{ijk}$. As
$\ip{\varphi_i}{\A^{\rm r}_i \varphi_i} = 0$ and all the operators
$\G_{ijk}$ are positive, we have $\ip{\varphi_i}{\G_{ijk}
\varphi_i} = 0$ and this further implies $\G_{ijk} \varphi_i = 0$.
Similarly, $\G_{ijk} \psi_j = 0$ and $\G_{ijk} \chi_k= 0$. Since
the set $\{\varphi_i,\psi_j,\chi_k \}$ spans $\hi_3$, we conclude
that $\G_{ijk} = 0$. This contradicts the normalization
$\sum_{ijk} \G_{ijk} = \id$. Hence, the three POVMs $\A^{\rm
r},\B^{\rm r},\C^{\rm r}$ are incompatible.
\end{example}

The sufficient condition \eqref{eq:reversed} for compatibility of
the reverse versions of regular rank-1 POVMs is not necessary. We
will next demonstrate that there are compatible observables that
do not satisfy \eqref{eq:reversed}.

\begin{example}[\emph{Two mutually unbiased bases}]
Consider a $d$-dimensional Hilbert space $\hi_d$ and an
orthonormal basis $\{\varphi_i\}_{i=0}^{d-1}$ in it.
We denote $\omega = e^{i 2 \pi / d}$ and define another orthonormal basis
$\{\psi_j\}_{j=0}^{d-1}$ by
\begin{equation}
\psi_j = \frac{1}{\sqrt{d}} \sum_{k=0}^{d-1} \omega^{jk} \varphi_k
\, .
\end{equation}
These two bases are mutually unbiased, meaning that $|\ip{\varphi_i}{\psi_j}| =
\frac{1}{\sqrt{d}}$ for all $i,j=0,\ldots,d-1$.
The related POVMs $\A_i = \kb{\varphi_i}{\varphi_i}$ and
$\B_j=\kb{\psi_j}{\psi_j}$ consists of noncommuting projections and are hence incompatible.

The reverse versions  $\A^{\rm r}$ and $\B^{\rm r}$ are
incompatible if $d=2$, since then $\A^{\rm r}$ and $\B^{\rm r}$
are just relabelings of $\A$ and $\B$. However, for any $d\geq 3$,
$\A^{\rm r}$ and $\B^{\rm r}$ are compatible even if the
inequality \eqref{eq:reversed} does not hold. To see this, we
recall that by Prop. 2 in \cite{carmeli-2012}, $\A^{\rm r}$ and
$\B^{\rm r}$ are compatible whenever there exists a quantum state
$\sigma \in \sh$ such that
\begin{align}
{\rm tr}[\A_i \sigma] =
\frac{1-\delta_{i0}}{d-1} \quad \textrm{and} \quad {\rm tr}[\B_j \sigma] =
\frac{1-\delta_{j0}}{d-1} \, .
\end{align}
It is not hard to check that the
operator
\begin{eqnarray*}
\sigma &=& \frac{1}{d-1} \sum_{i=1}^{d-1}
\ket{\varphi_i}\bra{\varphi_i}
\nonumber\\
&& - \frac{1}{(d-1)(d-2)} \sum_{1 \leq i < j \leq d}
(\ket{\varphi_i}\bra{\varphi_j} + \ket{\varphi_j}\bra{\varphi_i})
\end{eqnarray*}
is a density operator and satisfies the conditions above.
Therefore, $\A^{\rm r}$ and $\B^{\rm r}$ are compatible.
\end{example}

As explained in Example \ref{ex:doubly}, the reversing channel $\nu^{\rm r}$ can also be applied to an
already reverse observable $\A^{\rm r}$ to obtain a doubly reverse observable $\A^{\rm rr}$.
It is not hard to see
from Prop. \ref{prop:eigen} that two doubly reverse observables
are always compatible if their number of outcomes $N \geq 3$. More
generally, a sufficient condition for compatibility of $m$ doubly
reverse observables with $N$ outcomes each is $m \leq (N-1)^2$.

\subsection{Eigenvalue condition for PPOVMs}

Let $\A$ be a PPOVM with an outcome set $X$ and the normalization
$\sum_{x \in X} \A_x = \varrho \otimes \id$ for some state
$\varrho$. We denote by $m_x$ the minimal eigenvalue of the PPOVM
element $\A_x$ for each $x\in X$. The noise content of $\A$
satisfies
\begin{equation}\label{PPOVM-noise-content}
w(\A;\trivial) \geq \sum_{x \in X} m_x \, .
\end{equation}
To see this, we define a trivial PPOVM $\T$ as
\begin{equation}
\label{eq:ppovm-construction}
\T_x = \frac{m_x}{m} \varrho \otimes \id \, ,
\end{equation}
where $m=\sum_{x \in X} m_x$.
Since
\begin{equation}
\A_x \geq m_x \id \otimes \id \geq m_x \varrho \otimes \id \, ,
\end{equation}
we can define
\begin{equation}
\A'_x = \frac{1}{1-m} (\A_x - m_x \varrho \otimes \id)
\end{equation}
and $\A'$ is a valid PPOVM.
We can then write
\begin{equation}
\A = m \T + (1-m) \A' \, ,
\end{equation}
which confirms \eqref{PPOVM-noise-content}.
Prop. \ref{prop:incompatibility-m} thus implies the following result, analogous to Cor. \ref{prop:eigen}.

\begin{corollary}\label{prop:eigen-ppovm}
If $\A^{(1)},\ldots,\A^{(m)}$ is a collection of $m$ incompatible PPOVMs, then the sum of the minimal eigenvalues of all their effects is smaller than $m-1$.
\end{corollary}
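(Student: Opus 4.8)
The plan is to derive the statement as an immediate consequence of the noise-content estimate \eqref{PPOVM-noise-content} established just above, combined with the general incompatibility inequality of Prop.~\ref{prop:incompatibility-m}. The argument runs exactly parallel to the POVM proof of Cor.~\ref{prop:eigen}; the only structural change is that the exact evaluation of $w(\A;\trivial)$ furnished by Prop.~\ref{prop:inf} in the POVM setting is now replaced by the one-sided bound \eqref{PPOVM-noise-content}.

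Concretely, for each $j=1,\ldots,m$ I would write $\A^{(j)}$ with its normalization $\sum_x \A^{(j)}_x = \varrho^{(j)} \otimes \id$ and let $m^{(j)}_x$ denote the minimal eigenvalue of the effect operator $\A^{(j)}_x$. Applying the construction \eqref{eq:ppovm-construction} to each $\A^{(j)}$ separately yields
\[
w(\A^{(j)};\trivial) \geq \sum_{x} m^{(j)}_x \, .
\]
On the other hand, the assumed incompatibility of $\A^{(1)},\ldots,\A^{(m)}$ feeds directly into Prop.~\ref{prop:incompatibility-m}, giving $\sum_{j=1}^m w(\A^{(j)};\trivial) < m-1$. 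Summing the per-observable bounds over $j$ and chaining with this strict inequality,
\[
\sum_{j=1}^m \sum_{x} m^{(j)}_x \;\leq\; \sum_{j=1}^m w(\A^{(j)};\trivial) \;<\; m-1 \, ,
\]
which is precisely the asserted bound on the sum of the minimal eigenvalues of all the effects.

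The one point that genuinely differs from the POVM case, and hence the only place to be careful, is the representation-dependence of the minimal eigenvalue. Since two operators differing by $\omega \otimes \id$ with $\omega$ traceless represent the same PPOVM effect, the quantity $m^{(j)}_x$ is not intrinsic to the effect but depends on the chosen operator representative; this is why \eqref{PPOVM-noise-content} is only an inequality rather than the equality available for POVMs. There is, however, no real obstacle: the bound \eqref{PPOVM-noise-content} holds for every admissible representative, so the conclusion is valid for any fixed operator representation, and crucially the inequality points in the favorable direction, $w(\A^{(j)};\trivial) \geq \sum_x m^{(j)}_x$, which is exactly what is needed in order to combine a \emph{lower} bound on the noise content with the \emph{upper} bound supplied by incompatibility.
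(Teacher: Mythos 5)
Your proof is correct and takes essentially the same route as the paper: it chains the lower bound \eqref{PPOVM-noise-content}, obtained from the explicit construction \eqref{eq:ppovm-construction}, with the upper bound on the total noise content supplied by Prop.~\ref{prop:incompatibility-m}. Your closing remark on the representation-dependence of the minimal eigenvalues is a point the paper leaves implicit, and you correctly observe that it does not affect the argument since the bound holds for every admissible operator representative.
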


We note that in contrast to the case of POVMs, the eigenvalue formula~\eqref{PPOVM-noise-content} provides only a lower bound for the noise content of a PPOVM.
For instance, let
\begin{equation}\label{eq:only}
\A_x = p_x \ket{\psi_x}\bra{\psi_x} \otimes \id \, ,
\end{equation}
where $\ip{\psi_x}{\psi_{y}} = \delta_{xy}$ and $p_x$ is a probability distribution.
Then $m_x = 0$ for all $x$ and
the right hand side of \eqref{PPOVM-noise-content} equals 0.
But the PPOVM $\A$ is trivial, so that the left hand side of \eqref{PPOVM-noise-content} equals 1.

\subsection{Polytope state spaces}

A compact convex subspace $P$ of a finite dimensional vector space
$V$ is a \emph{polytope} if it has a finite number of extreme
elements. Let $\mathrm{ext}(P)=\{s_1, \ldots, s_n\}$ be the set of
extreme elements of a polytope $P$. Since every state $s \in P$
can be represented as a convex sum of elements in
$\mathrm{ext}(P)$, we have that
\begin{align*}
\A_x(s)&=\A_x\left(\sum_{i} \lambda_i s_i\right) = \sum_i \lambda_i \A_x(s_i) \\
&\geq \sum_i \lambda_i \min_{k} \A_x(s_k) = \min_{k} \A_x(s_k)
\end{align*}
for every $s \in P$, and thus $\inf_{s \in \state} \A_x(s) = \min_{s \in \mathrm{ext}(P)} \A_x(s)$.
Combining this with Prop. \ref{prop:incompatibility-m}, we get an analogous result to the previous eigenvalue conditions for POVMs and PPOVMs.
\begin{corollary}\label{cor:poly}
If $\A^{(1)},\ldots,\A^{(m)}$ is a collection of $m$ incompatible observables on a polytopic state space $P$, then
the sum of minimal values of all of their effects on $\mathrm{ext}(P)$ is smaller than $m-1$.
\end{corollary}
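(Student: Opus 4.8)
The plan is to treat this corollary as a direct instance of Proposition \ref{prop:incompatibility-m}, so that the only substantive work is to evaluate the noise content $w(\A;\trivial)$ of an observable $\A$ on the polytope state space $P$ and to recognize it as the quantity named in the statement, namely the sum of the minimal values of the effects of $\A$ over $\mathrm{ext}(P)$. By Proposition \ref{prop:inf} we already know that $w(\A;\trivial) = \sum_{x \in X} \inf_{s \in \state} \A_x(s)$ for an observable on any state space, so everything reduces to understanding the infimum $\inf_{s \in P} \A_x(s)$ in the polytopic case.

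First I would establish the identity $\inf_{s \in P} \A_x(s) = \min_{s \in \mathrm{ext}(P)} \A_x(s)$. The key input is that $P$ is a polytope, hence has a finite extreme set $\mathrm{ext}(P) = \{s_1,\ldots,s_n\}$ and every state $s \in P$ is a convex combination $s = \sum_i \lambda_i s_i$. Using that each effect $\A_x$ is an affine functional, I would expand $\A_x(s) = \sum_i \lambda_i \A_x(s_i) \geq \min_k \A_x(s_k)$, which shows the infimum over $P$ is bounded below by the minimum over the extreme points; since every extreme point is itself a state in $P$, this lower bound is attained, giving equality. Finiteness of $\mathrm{ext}(P)$ guarantees that the minimum (rather than merely an infimum) is genuinely achieved.

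With this identity in hand, Proposition \ref{prop:inf} yields $w(\A^{(j)};\trivial) = \sum_{x} \min_{s \in \mathrm{ext}(P)} \A^{(j)}_x(s)$ for each $j$, i.e.\ the noise content of $\A^{(j)}$ is exactly the sum of the minimal values of its effects over $\mathrm{ext}(P)$. Assuming the collection $\A^{(1)},\ldots,\A^{(m)}$ is incompatible, Proposition \ref{prop:incompatibility-m} gives $\sum_{j=1}^{m} w(\A^{(j)};\trivial) < m-1$. Substituting the polytopic expression for each noise content and collecting the double sum over $j$ and over outcomes produces precisely the sum of minimal values of all effects of all the observables on $\mathrm{ext}(P)$, and this total is strictly less than $m-1$, as claimed.

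I do not expect a genuine obstacle here: the argument is an assembly of results already proved, and the single new ingredient is the elementary reduction of the infimum over $P$ to a minimum over the finitely many extreme points, which rests only on affinity of effects and the convex-hull description of a polytope. The one point worth stating explicitly is that finiteness of $\mathrm{ext}(P)$ upgrades the infimum to an attained minimum, so that the phrase \emph{minimal values on $\mathrm{ext}(P)$} in the statement is well defined.
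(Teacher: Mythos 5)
Your proposal is correct and follows essentially the same route as the paper: reduce $\inf_{s\in P}\A_x(s)$ to $\min_{s\in\mathrm{ext}(P)}\A_x(s)$ via the convex decomposition of states into extreme points and affinity of effects, then combine Proposition \ref{prop:inf} with Proposition \ref{prop:incompatibility-m}. Your explicit remark that finiteness of $\mathrm{ext}(P)$ turns the infimum into an attained minimum is a small clarification the paper leaves implicit, but the argument is the same.
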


\begin{figure}
\centering
\includegraphics[scale=1.0]{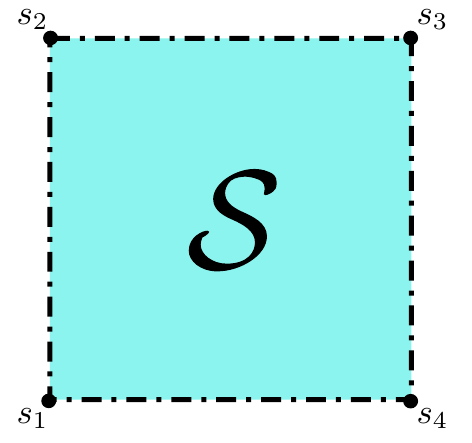}
\caption{\label{figure5} Squit state space.}
\end{figure}

In the following, we take $\state$ to be a state space that is
isomorphic to a square in $\real^2$, i.e., to the convex hull of
four points $s_1,s_2,s_3,s_4 \in \real^2$ satisfying
$s_1+s_3=s_2+s_4$ (see Fig. \ref{figure5}). This is called the
square bit state space, or squit state space for short.

We consider a class of binary observables $\A^{\alpha}$ and
$\B^{\beta}$, parametrized by $\alpha, \beta \in [0,1]$, whose
outcomes are labeled by $\pm$ and defined on the extreme points
$s_1$, $s_2$, $s_3$, and $s_4$ as
\begin{align*}
\A^{\alpha}_+(s_1) = \A^{\alpha}_+(s_2)=\alpha , \quad \A^{\alpha}_+(s_3)=\A^{\alpha}_+(s_4)=1, \\
\B^{\beta}_+(s_1) = \B^{\beta}_+(s_4)=\beta, \quad
\B^{\beta}_+(s_2)=\B^{\beta}_+(s_3)=1.
\end{align*}
The values of $\A^\alpha$ and $\B^\beta$ are depicted in Fig.
\ref{figure6}.

\begin{figure}
\centering
\includegraphics[scale=1.0]{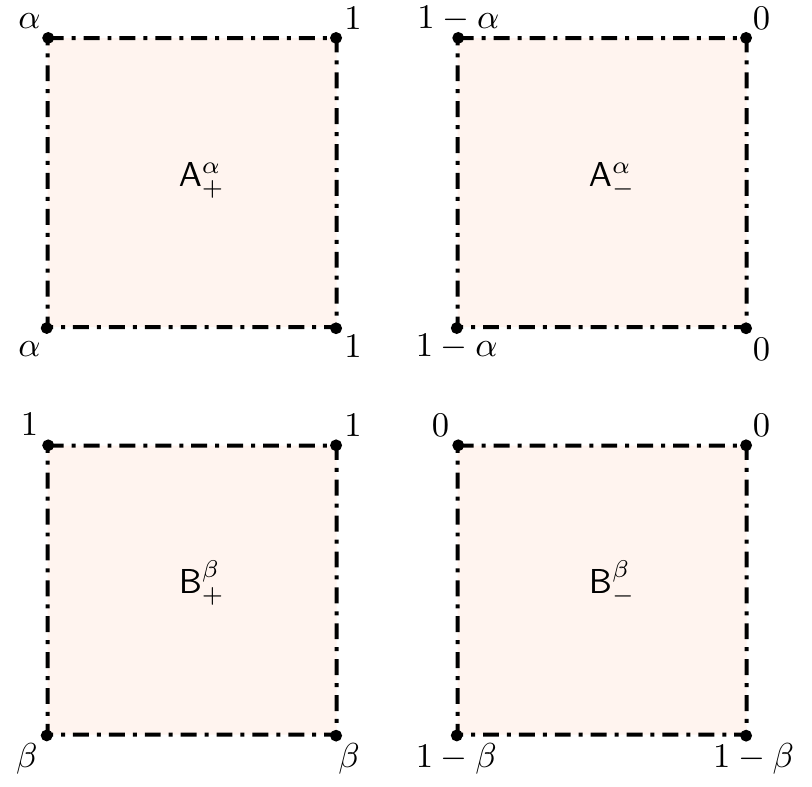}
\caption{\label{figure6} Observables $\A^\alpha$ and $\B^\beta$.}
\end{figure}

Now we see that
\begin{align*}
w(\A^{\alpha};\trivial) = \min_{s \in \mathrm{ext}(\state)} \A_+(s)+\min_{s \in \mathrm{ext}(\state)} \A_-(s) =  \alpha \, ,
\end{align*}
and similarly that $w(\B^{\beta};\trivial)= \beta$. Hence, by Cor. \ref{cor:poly}, if
\begin{equation}
\alpha + \beta \geq 1,
\label{eq:alpha-beta}
\end{equation}
then observables $\A^{\alpha}$ and $\B^{\beta}$ are compatible.
It is easy to find $\A^{\alpha}$ and $\B^{\beta}$ as mixtures with maximal noise contents,
\begin{align*}
\A^{\alpha} &= \alpha \T + (1-\alpha) \A \\
\B^{\beta} &= \beta \T + (1-\beta) \B, \\
\end{align*}
where $\T$ is the trivial binary observable with $\T_+(s)=1$ and $\T_-(s)=0$ for all $s \in \state$, and $\A \equiv \A^0$ and $\B \equiv \B^0$.

The observables $\A$ and $\B$ are themselves incompatible.
Moreso, they are maximally incompatible in the sense that the minimum amount of noise one has to mix them with to make their noisy versions compatible is enough to make any other pair of observables compatible.
More precisely, it was shown in \cite{busch-2013} that the observables $\lambda \A + (1-\lambda)\T_1$ and $\mu \B + (1-\mu) \T_2$ are incompatible for all choices of trivial observables $\T_1$ and $\T_2$ if and only if $\lambda + \mu > 1$.
Therefore, we conclude that the inequality \eqref{eq:alpha-beta} derived from Prop. \ref{prop:compatibility} is actually both necessary and sufficient for the compatibility of $\A^\alpha$ and $\B^\beta$.

\ \\
\section{Conclusions}

We have considered general probabilistic theories on an equal
footing and quantified the noise content of observables in every such
theory via the set of trivial observables. In the case of standard quantum
theory, the noise content is merely the sum of minimal eigenvalues
of the POVM effects. In the quantum theory of processes, the noise
content is bounded below by the sum of minimal eigenvalues of the
corresponding PPOVM effects. In general, the noise content can be
quantified with respect to any subset of observables.

We have derived the noise content inequality for a pair of
observables, which is a necessary condition for their
incompatibility. Our approach is based on a modification of the
adaptive strategy for building a joint observable. We have then
extended this result to the case of $m$ observables. By way of examples
with reverse regular observables we have demonstrated
non-triviality of the derived noise content inequality. Moreover,
this inequality turned out to not only be necessary but also
sufficient for incompatibility of some observables in the square bit
state space.

\section*{Acknowledgements}

The authors wish to thank Michal Sedl\'{a}k and M\'{a}rio Ziman
for clarifying discussions concerning PPOVMs and Tom Bullock on useful comments on the manuscript. S.N.F. acknowledges the support of Academy of Finland for a mobility grant to conduct
research in the University of Turku, where this article was
initiated. S.N.F. is grateful to the University of Turku for kind
hospitality. S.N.F. thanks the Russian Foundation for Basic
Research for partial support under Project No. 16-37-60070
mol-a-dk.


\end{document}